\documentclass[conference]{IEEEtran}
\IEEEoverridecommandlockouts
\usepackage{cite}
\usepackage{amsmath,amssymb,amsfonts}

\usepackage{graphicx}
\usepackage{booktabs}
\usepackage{array}
\usepackage{graphicx} 
\usepackage{float}
\usepackage{comment}
\usepackage{xurl}                     
\usepackage[breaklinks]{hyperref}    
\usepackage{mathtools}               
\allowdisplaybreaks                   
\usepackage{adjustbox}                
\usepackage{graphicx}                 
\usepackage{longtable}                
\usepackage{url} 

\usepackage{textcomp}
\usepackage{xcolor}
\usepackage[table]{xcolor} 
\usepackage{booktabs}     
\usepackage{multirow}      

\usepackage{booktabs}
\usepackage{multirow}
\usepackage{amsthm}

\newtheorem{theorem}{Theorem}

\usepackage{booktabs}
\usepackage{multirow}
\usepackage[T1]{fontenc}
\usepackage{amssymb}

\usepackage{amsmath}
\usepackage{longtable}
\usepackage[utf8]{inputenc}
\usepackage[T1]{fontenc}
\usepackage{xcolor}
\usepackage{amsmath,amssymb,amsfonts}
\usepackage{textcomp}
\usepackage{float}
\usepackage{array}
\usepackage{booktabs} 
\usepackage{bm}
\usepackage{epsfig}
\usepackage{soul}
\usepackage{pgf}
\usepackage{multirow}
\usepackage{bigstrut}
\usepackage{comment}
\usepackage{rotating}
\usepackage{pdflscape}
\usepackage{amsmath} 
\usepackage{lscape}
\usepackage{subcaption}
\usepackage{setspace}
\usepackage{amssymb}
\usepackage{amsmath}
\usepackage{relsize}
\usepackage{xcolor}
\usepackage{url}
\usepackage{algorithm}
\usepackage{algpseudocode}
\usepackage{graphicx}
\usepackage{multirow}
\usepackage{multicol}
\usepackage{comment}
\usepackage{booktabs} 
\usepackage{array} 
\usepackage{balance} 
\usepackage{soul}
\usepackage{siunitx}
\usepackage{graphicx}
\usepackage{subcaption} 
\usepackage{float}
\usepackage{xcolor}
\usepackage{comment}
\usepackage{graphicx}
\usepackage{tikz}
\usetikzlibrary{positioning, shapes, arrows.meta}
\usepackage{array}
\usepackage{xcolor}
\usepackage{pifont}
\usepackage{tikz}
\usetikzlibrary{arrows.meta,shapes,positioning} 

\def\BibTeX{{\rm B\kern-.05em{\sc i\kern-.025em b}\kern-.08em
    T\kern-.1667em\lower.7ex\hbox{E}\kern-.125emX}}
\begin{document}

\title{Traveling Salesman Problem with a preprocessing method for classical and quantum optimization}

\author{
\IEEEauthorblockN{Alessia Ciacco\IEEEauthorrefmark{1},
Luigi Di Puglia Pugliese\IEEEauthorrefmark{2},
Francesca Guerriero\IEEEauthorrefmark{1}}

\IEEEauthorblockA{\IEEEauthorrefmark{1}Department of Mechanical, Energy and Management Engineering,
University of Calabria, Rende (CS), Italy}

\IEEEauthorblockA{\IEEEauthorrefmark{2}Istituto di Calcolo e Reti ad Alte Prestazioni,
Consiglio Nazionale delle Ricerche (ICAR-CNR),
Rende (CS), Italy}


\IEEEauthorblockA{Emails: alessia.ciacco@unical.it,
luigi.dipugliapugliese@icar.cnr.it,
francesca.guerriero@unical.it}
}

\maketitle

\begin{abstract}
The Traveling Salesman Problem is a fundamental combinatorial optimization problem widely studied in operations research. Despite its simple formulation, it remains computationally challenging due to the exponential growth of the search space and the large number of constraints required to eliminate subtours. This paper introduces a preprocessing strategy that significantly reduces the size of the optimization model by restricting the set of candidate arcs and retaining only the lowest-cost neighbors for each vertex. Computational experiments on TSPLIB benchmark instances demonstrate that the proposed approach substantially reduces the number of decision variables. The method is evaluated using both classical and quantum optimization techniques, showing improvements in computational time and { consistent reductions in optimality gaps, while maintaining high-quality solutions}. Overall, the results indicate that the proposed preprocessing enhances the scalability of the formulation and makes it more suitable for both classical solvers and emerging quantum optimization frameworks.
\end{abstract}

\begin{IEEEkeywords}
Traveling Salesman Problem, Preprocessing method, Quantum Annealing, Hybrid Quantum-Classical optimization, Arc filtering.

\end{IEEEkeywords}

\section{Introduction}

The Traveling Salesman Problem (TSP) is one of the most extensively studied combinatorial optimization problems in operations research. 
 Given a set of cities and pairwise travel costs between them, the objective is to determine the shortest possible route that visits each city exactly once and returns to the starting point. 

From a computational complexity perspective, the problem has been proven to be NP-hard \cite{karp1972reducibility}. This classification implies that no polynomial-time algorithm is known for solving the problem exactly in the general case. Consequently, research on the TSP has focused on the development of both exact methods capable of solving moderately large instances and heuristic or metaheuristic approaches that provide high-quality approximate solutions for large-scale problems.

In recent years, the emergence of quantum computing has opened new perspectives for addressing computationally challenging combinatorial optimization problems. In particular, hybrid quantum-classical optimization approaches have attracted significant attention as potential tools for exploring large combinatorial search spaces. Among the available paradigms, Quantum Annealing (QA) is a metaheuristic specifically designed for optimization tasks \cite{rajak2023quantum, hauke2020perspectives}. QA is based on the gradual evolution of a quantum system toward the ground state of a problem Hamiltonian encoding the objective function. QA has been investigated for a variety of combinatorial optimization problems. Applications include routing problems~\cite{ciacco2025steiner, holliday2025advanced, osaba2025quantum, ciacco2025steiner2}, facility location~\cite{ciacco2026facility, malviya2023logistics}, packing problems~\cite{de2022hybrid, cellini2024qal}, personnel scheduling \cite{li2023quantum, ciacco2025Educational}. A broader overview of quantum approaches is provided in the survey by \cite{ciacco2025review}.
Despite these advances, applying QA to routing problems such as the TSP remains challenging. One of the early attempts to solve the TSP on quantum annealers is presented by \cite{jain2021tsp}, where the problem is reformulated as a Quadratic Unconstrained Binary Optimization (QUBO) model and implemented on a D-Wave processor. Experimental results show important limitations: due to embedding requirements and hardware connectivity constraints, the approach can only handle very small instances.
\cite{warren2020tsp} analyzes several software implementations for solving the TSP on D-Wave quantum annealers. The study shows that current hardware cannot directly handle realistic TSP instances. \cite{bochkarev2026quantum} observe that combinatorial problems such as the TSP require complex constraint structures that significantly increase the number of variables and interactions in QUBO models. These limitations highlight the need for preprocessing techniques capable of reducing the model size and restricting the set of candidate arcs before the optimization stage.

In this paper, { we focus on the Euclidian TSP, hence we address the symmetric version of the problem.} We propose a novel preprocessing strategy aimed at reducing the size of the TSP formulation while preserving the feasibility of the problem. The method restricts the set of candidate arcs considered in the optimization model by retaining, for each vertex, only a subset of neighboring vertices associated with the lowest travel costs. To evaluate the effectiveness of the preprocessing strategy, we solve the resulting TSP formulations using both classical and quantum optimization techniques. 
From the quantum optimization perspective, the proposed approach enables the {treatment} of larger TSP instances than those typically considered in the current quantum optimization literature. In particular, thanks to the reduction in the number of candidate arcs introduced by the preprocessing step, we are able to {handle} TSP instances with up to 15 customers using a quantum optimization approach. 
{The results indicate an improvement in scalability compared to existing quantum approaches. In this context, the aim of this work is not to demonstrate the superiority of quantum optimization over classical methods, but rather to show how the proposed preprocessing strategy reduces the model size and enables current hybrid quantum solvers to handle slightly larger TSP instances than those typically considered in the literature.}
Moreover, the computational experiments are conducted on benchmark instances commonly used in the literature rather than on artificially generated toy problems. This provides a more realistic evaluation of the potential of quantum optimization methods for routing problems and represents one of the first attempts to address quantum TSP using established benchmark datasets.

The remainder of the paper is organized as follows. Section~\ref{sec:problem_statement} presents the formal definition of the TSP and its mathematical formulation. Section~\ref{sec:caf} introduces the proposed preprocessing method and discusses its theoretical properties and feasibility guarantees. Section~\ref{sec:solution_approach} describes the overall solution approach adopted to solve the problem using both classical and quantum optimization techniques. Section~\ref{sec:experiments} reports the computational experiments, including a thorough analysis of the results obtained.
Finally, Section~\ref{sec:conclusions} concludes the paper and outlines possible directions for future research.

\section{Problem definition}\label{sec:problem_statement}

The TSP consists of finding a minimum-cost tour that visits each city exactly once and returns to the starting city. Let $G = (V, A)$ be a complete graph, where $V = \{1, \dots, n\}$ denotes the set of cities and $A$ the set of arcs. 
Let $S$ be any proper subset of the vertex set $V$, i.e., $S \subset V$, representing a subset of cities. Each arc $(i,j) \in A$ is associated with a non-negative cost $c_{ij}$, representing the distance cost between city $i$ and city $j$. { We assume that} the cost matrix satisfies $c_{ij} = c_{ji}, \; \forall\: i,j \in V$, and self-loops are excluded, i.e., arcs of the form $(i,i)$ are not included in $A$. 
The objective of the TSP is to determine a minimum-cost Hamiltonian cycle on $G$.
The parameters used in the formulation of the TSP are summarized in Table~\ref{tab:notations}.

\begin{table}[ht]
\centering
\begin{tabular}{>{\centering\arraybackslash}m{1cm}>{\arraybackslash}m{6cm}}
\toprule
\textbf{Notation}  &   \textbf{Description} \\
\midrule
$V$ & set of all cities \\
$A$  & set of arcs in the graph \\
$S$  & any proper subset of set $V$ \\
\hline
$c_{ij}$ & distance cost between node $i$ and node $j$\\
\bottomrule
\end{tabular}
\caption{Notation used in the problem formulation.} \label{tab:notations}
\vspace{-0.5cm}
\end{table}

\subsection{Mathematical Formulation}\label{sec:mathematical_formulation}
The variable used to define the model is:
\begin{description}
    \item[$x_{ij}$] $\begin{cases} 
    1, & \text{if arc } (i,j) \in A \text{ is included in the tour}  \\
    0, & \text{otherwise}
    \end{cases}$
\end{description}

The TSP can be formulated as an Integer Linear Programming (ILP) model as follows:
\begin{align}
    &\text{min} \quad \sum_{(i,j)\in A} c_{ij} x_{ij} \label{eq:objective} \\
    &\text{s.t.} \nonumber \\
    &\sum_{i \in V : (i,j)\in A} x_{ij} = 1, \quad \forall j \in V \label{eq:visit_once1} \\
    &\sum_{i \in V : (j,i)\in A} x_{ji} = 1, \quad \forall j \in V \label{eq:visit_once2} \\
    &\sum_{\substack{i \in S}} \sum_{\substack{j \in  S}} x_{ij} \leq |S| - 1 , \quad  S\subset V, |S|\geq 2\label{eq:subtour} \\  
    &x_{ij} \in \{0,1\}, \quad \forall (i,j) \in A \label{eq:binary}
\end{align}

The objective function~\eqref{eq:objective} minimizes the distance cost. 
Constraints~\eqref{eq:visit_once1}-\eqref{eq:visit_once2} impose the degree conditions at each vertex. For every node ($j \in V$), exactly one incoming arc and exactly one outgoing arc are selected in the solution. This ensures that each city is visited exactly once and that the resulting solution consists of a collection of directed cycles covering all vertices.
Constraints~\eqref{eq:subtour} are subtour elimination constraints whose purpose is to prevent the formation of disconnected cycles in the solution. 
Constraints~\eqref{eq:binary} define the domain of the decision variables.

\section{Cost-based Arc Filtering (CAF)} \label{sec:caf}
We introduce the CAF preprocessing method to reduce the size of the optimization model by restricting the set of candidate arcs and retaining, for each vertex, only the neighbors associated with the lowest travel costs. This reduction is motivated by the fact that, in a complete graph, the number of arcs is $|A| = n(n - 1)$, which grows quadratically with the number of vertices and results in a substantial number of decision variables for large instances. Leveraging structural properties of Hamiltonian graphs, the proposed filtering procedure reduces the model size while preserving the existence of Hamiltonian cycles.

Our approach builds on Dirac’s theorem \cite{dirac1952some}, a classical result in graph theory. Consider an undirected graph $G = (V, E)$ with $n = |V| \geq 3$ vertices, and let $\delta(G)$ denote its minimum degree. Dirac showed that if $\delta(G) \geq \frac{n}{2}$, then $G$ is Hamiltonian, i.e., it contains a cycle visiting each vertex exactly once. This result implies that a graph may remain Hamiltonian even after the removal of a large number of edges, provided that each vertex maintains a sufficiently high degree.

Motivated by this observation, the CAF procedure constructs a reduced set of candidate arcs by retaining, for each vertex, only those with the smallest associated costs. 
For each vertex $i \in V$, the arcs $(i,j)$ with $j \neq i$ are sorted in nondecreasing order of cost $c_{ij}$, and only the first $k = \left\lceil \frac{n}{2} \right\rceil$ neighbors are retained.

The procedure is summarized in Algorithm~\ref{afgr}.

\begin{algorithm}[ht]
\caption{CAF}
\label{afgr}
\begin{algorithmic}[1]
\State \textbf{Input:} $V$: set of vertices, $c_{ij}$: travel cost matrix
\State \textbf{Output:} $\tilde{A}$: reduced set of candidate arcs

\State $n \gets |V|$
\State $k \gets \lceil n/2 \rceil$
\State $\tilde{A} \gets \emptyset$

\For{each $i \in V$}
    \State Order vertices $j \neq i$ according to increasing cost $c_{ij}$
    \State Select the first $k$ vertices in the ordered list
    \For{each selected $j$}
        \State Add arcs $(i,j)$ and $(j,i)$ to $\tilde{A}$
    \EndFor
\EndFor

\State \Return $\tilde{A}$
\end{algorithmic}

\end{algorithm}

\subsection{Feasibility Preservation}

A potential drawback of arc filtering procedures is that removing arcs may destroy all Hamiltonian cycles, rendering the instance infeasible. 
We show that choosing $k = \lceil n/2 \rceil$ guarantees that the reduced graph still contains at least one Hamiltonian cycle.

\begin{theorem}[Feasibility of the Reduced Graph]
Let $G=(V,A)$ be the complete directed graph and let $\tilde{G}=(V,\tilde{A})$ be the directed graph obtained by applying the CAF procedure with $k=\lceil n/2\rceil$. 
Then the undirected graph induced by $\tilde{A}$ is Hamiltonian.
\end{theorem}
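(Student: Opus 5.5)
The plan is to reduce the statement directly to Dirac's theorem, recalled above. Let $H=(V,E)$ denote the undirected graph induced by $\tilde{A}$, i.e., $\{i,j\}\in E$ if and only if $(i,j)\in\tilde{A}$ or $(j,i)\in\tilde{A}$. Since Algorithm~\ref{afgr} always inserts $(i,j)$ and $(j,i)$ together, $E$ is exactly the set of pairs $\{i,j\}$ such that $j$ is among the $k$ cheapest neighbours of $i$, or $i$ is among the $k$ cheapest neighbours of $j$. Self-loops are excluded and $E$ is a set, so $H$ is a simple undirected graph on $n$ vertices. I will assume $n\ge 3$, since Dirac's theorem and the notion of a Hamiltonian cycle in the TSP require it.

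The key step is a lower bound on the minimum degree of $H$. For any vertex $i$, the loop of Algorithm~\ref{afgr} at $i$ selects $k=\lceil n/2\rceil$ vertices $j\neq i$. These are distinct, and $k\le n-1$ for $n\ge 2$, so the selection is well defined. Each selected $j$ contributes the edge $\{i,j\}$ to $E$. Hence $\deg_H(i)\ge k=\lceil n/2\rceil\ge n/2$.

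Additional edges at $i$ can come from other vertices that select $i$ in their own lists. These only increase $\deg_H(i)$, so they do not affect the bound. Ties in the cost ordering are likewise irrelevant: the bound depends only on the fact that exactly $k$ distinct neighbours are chosen, not on which ones. Therefore $\delta(H)\ge n/2$, and Dirac's theorem yields a Hamiltonian cycle in $H$.

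Finally, I would note why this gives feasibility for the directed model~\eqref{eq:objective}--\eqref{eq:binary} restricted to $\tilde{A}$. Both orientations of every edge of $H$ belong to $\tilde{A}$. So traversing the Hamiltonian cycle of $H$ in either direction gives a directed Hamiltonian cycle using only arcs of $\tilde{A}$, which is a feasible solution of the reduced model.

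I do not expect a real obstacle here. The only points needing care are the degenerate cases ($n\le 2$, and checking that $k\le n-1$ so the selection is possible), and making explicit that the symmetric insertion of $(j,i)$ is what turns the out-degree guarantee of $k$ into an undirected degree guarantee.
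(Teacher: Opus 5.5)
Your proposal is correct and matches the paper's proof: build the undirected graph from the symmetric arc pairs, note that each vertex's own selection step gives it at least $k=\lceil n/2\rceil \ge n/2$ distinct neighbours, and apply Dirac's theorem for $n\ge 3$. Your extra remarks are sound refinements that the paper leaves implicit: $k\le n-1$ makes the selection well defined, and either orientation of the undirected Hamiltonian cycle lies in $\tilde{A}$, so it is feasible for the reduced directed model.
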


\begin{proof}
Let $G_U=(V,A)$ denote the undirected graph obtained from $\tilde{G}$ by replacing each pair of symmetric arcs $(i,j)$ and $(j,i)$ with the undirected arc $\{i,j\}$.
By construction of the CAF procedure, for each vertex $i \in V$ the algorithm selects $k$ distinct neighbors with the smallest travel costs. 
Therefore each vertex $i$ is adjacent to at least $k$ distinct vertices in $G_U$, and the minimum degree satisfies
$$
\delta(G_U) \ge k = \left\lceil \frac{n}{2} \right\rceil \ge \frac{n}{2}.
$$
Since $n \ge 3$, Dirac's theorem implies that $G_U$ is Hamiltonian. 
Consequently, the reduced graph produced by the CAF procedure always admits at least one Hamiltonian cycle.
\end{proof}

\subsection{Optimality Considerations}

The CAF procedure is designed to reduce the number of candidate arcs while preserving a sufficiently dense graph structure that guarantees the existence of Hamiltonian cycles. The filtering step does not impose explicit conditions aimed at preserving the optimal tour of the original complete graph. Instead, its design is motivated by structural properties commonly observed in Euclidean TSP instances.
In Euclidean TSP instances, the travel costs correspond to Euclidean distances between points in $\mathbb{R}^2$, and therefore satisfy the triangle inequality
$$
c_{ij} \le c_{ik} + c_{kj}, \quad \forall i,j,k \in V.
$$
These instances exhibit well-known geometric characteristics that influence the structure of optimal tours. In particular, optimal tours tend to be composed primarily of relatively short arcs connecting nearby vertices, while long-distance connections rarely appear in optimal solutions \cite{reinelt2001traveling, applegate2011traveling}.

The CAF procedure leverages this structural behavior by retaining, for each vertex $i$, the $k=\lceil n/2\rceil$ incident arcs with the smallest travel costs. Since these correspond to the closest vertices in Euclidean space, the filtering process removes primarily long-distance connections while preserving a dense set of geometrically local arcs.

The resulting reduced arc set $\tilde{A}$ therefore concentrates the candidate arcs in the region of the solution space where optimal or near-optimal tours are typically found, while significantly decreasing the size of the optimization model. {However, since the filtering procedure does not explicitly enforce the preservation of the optimal tour, optimality is not guaranteed, and deviations from the global optimum may occur.}

\section{Solution approach} \label{sec:solution_approach}

We adopt a structured multi-phase approach to solve the TSP by combining a preprocessing strategy with both classical and quantum optimization techniques. The process begins with the formulation of the problem as an ILP model. To improve the tractability of the problem, we introduce the CAF approach, which reduces the set of candidate arcs while preserving the feasibility of the tour.
To assess the effectiveness of the preprocessing step, the ILP formulation is solved both with and without CAF using a classical commercial solver. Subsequently, the reduced problem is also addressed within a quantum optimization framework using D-Wave's hybrid solver. The overall solution process is structured as follows:

\begin{enumerate}

\item \textbf{Mathematical formulation}: the TSP is formulated as an ILP model representing the standard mathematical formulation of the routing problem.

\item \textbf{CAF preprocessing}: the CAF approach is applied to reduce the number of candidate arcs by retaining, for each node, only a subset of neighboring nodes associated with the lowest travel costs. This preprocessing step reduces the size of the optimization model while preserving a graph structure that still admits Hamiltonian tours.

\item \textbf{Classical optimization analysis}: the ILP formulation is solved with and without CAF using the Gurobi solver in order to evaluate the impact of the preprocessing strategy on the computational tractability of the model.

\item \textbf{Quantum optimization}: the preprocessed formulation is then solved within a quantum optimization framework implemented through D-Wave's \texttt{LeapCQMHybrid} solver. The problem is formulated within the Constrained Quadratic Model (CQM) framework, which supports the integration of linear constraints together with binary and integer decision variables within a unified optimization model.

\end{enumerate}

\section{Computational study} \label{sec:experiments} 
All experiments are performed on a Windows~10 workstation equipped with an Intel processor (4 physical cores, 8 threads) and 15.6~GB of RAM, using Python~3 within a Jupyter Notebook environment. Exact optimization experiments are conducted using Gurobi Optimizer~11.0.1. The CQM formulation is implemented using D-Wave’s Ocean SDK, which provides the Python interface for modeling and executing hybrid quantum–classical optimization workflows. In particular, the libraries \texttt{dimod}~0.12.18 and \texttt{dwave-system}~1.28.0 are used to define the CQM model and submit quantum queries to the D-Wave platform. These queries are executed on the \texttt{Advantage2\_system1.4} quantum processing unit (4596 qubits, Zephyr topology). Unless otherwise specified, solver parameters are kept at their default values to ensure consistency and reproducibility across experiments. {In particular, the \texttt{LeapHybridCQMSampler} solver provided by D-Wave is used to solve the CQM model, with a time limit of 10 seconds per run specified through the \texttt{time\_limit} parameter, while all other solver parameters are left at their default settings.} Further details regarding the solver and its applications can be found in \cite{benson2023cqm, osaba2024solving, osaba2025d}.

All computational experiments are conducted on benchmark instances derived from the \texttt{berlin52.tsp} dataset from the TSPLIB benchmark library \cite{reinelt1991tsplib}, one of the most widely used collections of standard instances for evaluating algorithms for the TSP. TSPLIB provides well-established benchmarks with known optimal solutions, enabling rigorous and reproducible evaluation of optimization approaches. The use of these standardized instances facilitates meaningful comparisons with previously published results in the TSP literature.

The experimental evaluation is organized into three main parts. The first part evaluates the impact of the CAF preprocessing on the size of the optimization model, highlighting its effect on the reduction of decision variables and overall model complexity. The second part presents the results obtained using classical optimization methods, analyzing the performance of the proposed approach in terms of solution quality and computational efficiency. {It is important to note that the goal of the classical experiments is not to compete with large-scale specialized TSP solvers, but rather to evaluate the relative impact of the CAF preprocessing on model size and computational effort within a controlled setting.} Finally, the third part reports the experiments conducted using the hybrid quantum solver, assessing the effectiveness of the CAF preprocessing in improving the quality of the solutions obtained within the quantum optimization framework.
\subsection{Impact of CAF on the model size}
Table~\ref{tab:milp_reduction} highlights the impact of the CAF preprocessing on the size of the optimization model. 
\begin{table}[ht]
\centering
\resizebox{\columnwidth}{!}{
\begin{tabular}{c|cc|c || c|cc|c}
\toprule
\multicolumn{8}{c}{\textbf{Number of variables }} \\
\midrule

$V$ & Without CAF & With CAF & GAP &
$V$ & Without CAF & With CAF & GAP \\

\midrule    
   5     & 20    & 18    & 10\%  & 28    & 756   & 486   & 36\% \\
    6     & 30    & 24    & 20\%  & 29    & 812   & 540   & 33\% \\
    7     & 42    & 34    & 19\%  & 30    & 870   & 562   & 35\% \\
    8     & 56    & 42    & 25\%  & 31    & 930   & 622   & 33\% \\
    9     & 72    & 58    & 19\%  & 32    & 992   & 648   & 35\% \\
    10    & 90    & 64    & 29\%  & 33    & 1056  & 708   & 33\% \\
    11    & 110   & 88    & 20\%  & 34    & 1122  & 738   & 34\% \\
    12    & 132   & 96    & 27\%  & 35    & 1190  & 806   & 32\% \\
    13    & 156   & 118   & 24\%  & 36    & 1260  & 838   & 33\% \\
    14    & 182   & 126   & 31\%  & 37    & 1332  & 910   & 32\% \\
    15    & 210   & 154   & 27\%  & 38    & 1406  & 946   & 33\% \\
    16    & 240   & 166   & 31\%  & 39    & 1482  & 1022  & 31\% \\
    17    & 272   & 194   & 29\%  & 40    & 1560  & 1056  & 32\% \\
    18    & 306   & 208   & 32\%  & 41    & 1640  & 1136  & 31\% \\
    19    & 342   & 244   & 29\%  & 42    & 1722  & 1166  & 32\% \\
    20    & 380   & 262   & 31\%  & 43    & 1806  & 1238  & 31\% \\
    21    & 420   & 294   & 30\%  & 44    & 1892  & 1274  & 33\% \\
    22    & 462   & 312   & 32\%  & 45    & 1980  & 1358  & 31\% \\
    23    & 506   & 352   & 30\%  & 46    & 2070  & 1390  & 33\% \\
    24    & 552   & 376   & 32\%  & 47    & 2162  & 1472  & 32\% \\
    25    & 600   & 420   & 30\%  & 48    & 2256  & 1514  & 33\% \\
    26    & 650   & 432   & 34\%  & 49    & 2352  & 1608  & 32\% \\
    27    & 702   & 474   & 32\%  & 50    & 2450  & 1652  & 33\% \\

\bottomrule
\end{tabular}
}
\caption{Comparison of the size of the TSP ILP formulation without and with the CAF preprocessing procedure. Each row corresponds to a different instance size defined by the number of vertices $V$. The table reports:  
\textbf{Without CAF}: number of decision variables in the complete ILP formulation, 
\textbf{With CAF}: number of decision variables after applying the CAF preprocessing procedure, 
\textbf{GAP}: percentage reduction in the number of variables obtained by CAF with respect to the complete formulation.}
\label{tab:milp_reduction}
\end{table}
The results highlight that the CAF preprocessing procedure reduces the number of decision variables by approximately 30\% with respect to the original formulation. This reduction {significantly reduces the number of decision variables and leads to a more compact optimization model.}
As the number of vertices increases, the difference between the original formulation and the reduced formulation becomes more evident, confirming that the impact of the preprocessing grows with the problem size.
Figure~\ref{fig:variables_comparison} illustrates this behavior by comparing the number of decision variables in the formulation without CAF and with CAF as the number of vertices increases. The two curves show how the number of variables grows for the two formulations as 
$V$ increases. The curve associated with the formulation with CAF remains consistently below the one without CAF. In particular, the gap between the two curves gradually widens as $V$ increases, highlighting the effectiveness of the preprocessing in reducing the model size and improving scalability.
\begin{figure}[h]
\centering
\includegraphics[width=0.9\linewidth]{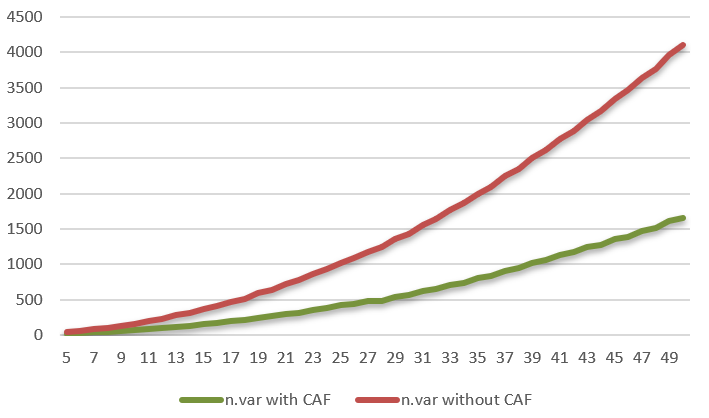}
\caption{Number of decision variables in the TSP as the number of vertices $V$ increases. The red curve represents the number of variables in the standard formulation (n.var without CAF), while the green curve shows the number of variables after applying the CAF preprocessing procedure (n.var with CAF).}
\label{fig:variables_comparison}
\vspace{-0.5cm}
\end{figure}

\subsection{Classical optimization experiments}
Table~\ref{tab:milp_results} reports the computational performance of the complete ILP formulation of the TSP for instances with an increasing number of vertices $V$, both without and with the CAF preprocessing procedure. In all experiments, the solver is executed with a time limit of 300 seconds. 

\begin{table}[h]
\centering
\resizebox{\columnwidth}{!}{
\begin{tabular}{c|cccc|cccc|c}
\toprule
\multirow{2}{*}{\textbf{V}} & \multicolumn{4}{c|}{\textbf{Without CAF}} & \multicolumn{4}{c|}{\textbf{With CAF}} & \multirow{2}{*}{\textbf{CAF gap}} \\
\cmidrule(lr){2-5} \cmidrule(lr){6-9}
 & \textbf{OF} & \textbf{Time}& \textbf{Time Solve} & \textbf{GAP Opt} & \textbf{OF} & \textbf{Time}& \textbf{Time Solve}  & \textbf{GAP Opt} & \\
\midrule
    5     & 2314.55 & 0.00  & 0.00  & 0\%   & 2314.55 & 0.08  & 0.04  & 0\% & 0.0\% \\
    6     & 2315.15 & 0.00  & 0.00  & 0\%   & 2323.20 & 0.00  & 0.00  & 0\% & 0.4\%\\
    7     & 2321.39 & 0.01  & 0.00  & 0\%   & 2321.39 & 0.01  & 0.00  & 0\% & 0.0\% \\
    8     & 2550.94 & 0.01  & 0.00  & 0\%   & 2550.94 & 0.02  & 0.01  & 0\% & 0.0\% \\
    9     & 2820.38 & 0.04  & 0.02  & 0\%   & 2874.44 & 0.02  & 0.01  & 0\% & 1.9\%\\
    10    & 2826.50 & 0.09  & 0.03  & 0\%   & 2826.50 & 0.08  & 0.04  & 0\% & 0.0\% \\
    11    & 4038.44 & 0.21  & 0.09  & 0\%   & 4038.44 & 0.16  & 0.08  & 0\% & 0.0\% \\
    12    & 4056.68 & 0.50  & 0.17  & 0\%   & 4056.68 & 0.30  & 0.17  & 0\% & 0.0\% \\
    13    & 4564.46 & 1.25  & 0.47  & 0\%   & 4564.46 & 0.81  & 0.49  & 0\% & 0.0\% \\
    14    & 4946.85 & 3.26  & 1.32  & 0\%   & 4965.33 & 1.65  & 1.15  & 0\% & 0.4\% \\
    15    & 4967.30 & 8.20  & 2.97  & 0\%   & 4967.30 & 4.70  & 3.28  & 0\%& 0.0\%  \\
    16    & 4990.46 & 20.04 & 7.80  & 0\%   & 4990.46 & 12.04 & 9.32  & 0\%& 0.0\%  \\
    17    & 5048.45 & 48.11 & 20.21 & 0\%   & 5048.45 & 31.17 & 25.08 & 0\% & 0.0\% \\
    18    & 5139.38 & 107.59 & 45.38 & 0\%   & 5139.38 & 57.57 & 47.01 & 0\% & 0.0\% \\
    19    & 5164.22 & 257.61 & 109.74 & 0\%   & 5164.22 & 139.12 & 115.51 & 0\% & 0.0\% \\
    20    & 5270.86 & 670.99 & 264.14 & 0\%   & 5270.86 & 316.13 & 257.66 & 0\% & 0.0\% \\
    21    & 12779.84 & 1239.27 & 306.43 & --   & 10683.42 & 437.07 & 305.67 & --& -16.4\% \\
   
\bottomrule
\end{tabular}
}
\caption{Computational results for the complete ILP formulation on instances with an increasing number of vertices $V$, comparing the model performance without and with the CAF preprocessing procedure. Each row corresponds to a different instance size defined by the number of vertices $V$. The table reports: 
\textbf{OF}: objective function value returned by the solver, 
\textbf{Time}: total computational time in seconds required to run the model (including instance construction and solution time), 
\textbf{Time Solve}: time in seconds required by the solver to solve the optimization model, 
\textbf{GAP Opt}: percentage optimality gap with respect {to the best solution found} by the solver.  \textbf{CAF gap}: percentage variation between the objective values obtained with and without CAF preprocessing.}
\label{tab:milp_results}
\vspace{-0.3cm}
\end{table}

For instances up to $V = 20$, the solver is able to prove optimality in both configurations, as indicated by the zero optimality gap. {Importantly, the CAF preprocessing preserves feasibility, although slight differences in the objective value may arise, as some arcs of the original optimal tour may be excluded during filtering. Consequently, the solutions obtained with CAF are optimal for the reduced formulation and remain close to those of the original problem.} The main benefit of CAF emerges in terms of computational time. While the difference is negligible for small instances, for medium and larger instances the preprocessing significantly reduces the solving time. For example, at $V=18$ the total time decreases from $107,59$ s to $57,57$ s, and at $V=19$ from $257,61$ s to $139,12$ s. A similar effect is observed for $V=20$, where the computational effort is reduced by more than half. Overall, CAF reduces the computational time by approximately $32\%$ on average, with reductions between $40\%$ and $54\%$ for medium and large instances ($V \geq 10$). For the instance with $V=21$, the solver reaches the imposed time limit of $300$ seconds without computing a valid lower bound, resulting in an infinite optimality gap.
{ The CAF gap is $0\%$ for most instances. In the remaining cases, it stays below $0.5\%$ twice ($V =4 \text{ and } 14$) and below $2\%$ once ($V =9$), except for $V=21$, where CAF improves the objective value by $16.4\%$.}
This occurs because the time limit applies only to the optimization phase and the large number of subtour constraints makes the root node extremely large. Nevertheless, the solution obtained with CAF achieves a smaller objective value than the one obtained without preprocessing, highlighting the practical advantage of the proposed reduction even when optimality cannot be certified within the time limit. 

\subsection{Quantum hybrid solver experiments}

Table~\ref{tab:performance-comparison1} presents the computational results obtained from the execution of the proposed model, comparing the performance with and without the CAF preprocessing method. For each instance, five independent runs were performed. 

\begin{table*}[ht]
\centering
\resizebox{\textwidth}{!}{
\begin{tabular}{cc|cccc|cccc|cccc|cccc}
\toprule
\multirow{2}{*}{$V$} & \multirow{2}{*}{OF opt}
& \multicolumn{8}{c|}{\textbf{Without CAF}}
& \multicolumn{8}{c}{\textbf{With CAF}} \\
\cmidrule(lr){3-10} \cmidrule(lr){11-18}
& & OF Avg & OF Std & GAP & \%Solved & Time Avg & Time Std & Time Solve Avg & Time Solve Std
& OF Avg & OF Std & GAP & \%Solved & Time Avg & Time Std & Time Solve Avg & Time Solve Std\\
\midrule
5  & 2314.55 & 2314.55 & 0.00 & 0\%  & 100\% & 14.91 & 0.08 & 10.21 & 0.04 & 2314.55 & 0.00 & 0\%  & 100\% & 15.63 & 1.32 & 10.21 & 0.01 \\
6  & 2315.15 & 2315.15 & 0.00 & 0\%  & 100\% & 15.01 & 0.29 & 10.22 & 0.03 & 2323.20 & 0.00 & 0\%  & 100\% & 14.91 & 0.10 & 10.21 & 0.01 \\
7  & 2321.39 & 2321.39 & 0.00 & 0\%  & 100\% & 14.97 & 0.08 & 10.29 & 0.04 & 2321.39 & 0.00 & 0\%  & 100\% & 15.05 & 0.26 & 10.26 & 0.05 \\
8  & 2550.94 & 2550.94 & 0.00 & 0\%  & 100\% & 15.12 & 0.10 & 10.24 & 0.04 & 2550.94 & 0.00 & 0\%  & 100\% & 15.56 & 1.16 & 10.26 & 0.04 \\
9  & 2820.38 & 2820.38 & 0.00 & 0\%  & 100\% & 15.37 & 0.04 & 10.23 & 0.01 & 2874.44 & 0.00 & 0\%  & 100\% & 15.43 & 0.26 & 10.22 & 0.01 \\
10 & 2826.50 & 2826.50 & 0.00 & 0\%  & 100\% & 15.61 & 0.07 & 10.30 & 0.05 & 2826.50 & 0.00 & 0\%  & 100\% & 15.63 & 0.32 & 10.23 & 0.04 \\
11 & 4038.44 & 4038.44 & 0.00 & 0\%  & 100\% & 16.38 & 0.55 & 10.29 & 0.05 & 4041.13 & 6.01 & 0\%  & 100\% & 15.58 & 0.16 & 10.29 & 0.07 \\
12 & 4056.68 & 4621.30 & 222.30 & 14\% & 100\% & 17.15 & 0.91 & 10.42 & 0.10 & 4382.38 & 241.19 & 8\%  & 100\% & 15.63 & 0.54 & 10.35 & 0.08 \\
13 & 4564.46 & 6542.86 & 319.72 & 43\% & 100\% & 18.41 & 1.56 & 10.47 & 0.44 & 5063.10 & 250.39 & 11\% & 100\% & 17.21 & 0.72 & 10.64 & 0.24 \\
14 & 4946.85 & 6446.66 & 271.75 & 30\% & 100\% & 20.46 & 1.26 & 10.51 & 0.15 & 5904.61 & 918.30 & 19\% & 100\% & 18.36 & 1.15 & 10.73 & 0.32 \\
15 & 4967.30 & 9304.35 & 2068.11 & 87\% & 40\%  & 25.28 & 4.35 & 10.51 & 0.39 & 9221.91 & 536.49 & 86\% & 40\%  & 20.97 & 2.71 & 10.23 & 0.15 \\
\midrule
\textbf{AVG} & \textbf{3429,33} & \textbf{4191,14} & \textbf{261,99} & \textbf{16\%} &  & \textbf{17,15} & \textbf{0,84} & \textbf{10,33} & \textbf{0,12} & \textbf{3984,01} & \textbf{177,49} & \textbf{11\%} &  & \textbf{16,36} & \textbf{0,79} & \textbf{10,33} & \textbf{0,09} \\
\bottomrule
\end{tabular}}
\caption{
Results for the TSP comparing performance with and without the CAF method. Each row corresponds to a different instance size defined by the number of vehicles $V$. The table reports: 
\textbf{OF opt}: optimal objective value obtained with Gurobi, 
\textbf{OF Avg}: average objective value across ten runs, 
\textbf{OF Std}: standard deviation of the objective value, 
\textbf{GAP}: average optimality gap with respect to OF opt, 
\textbf{\% Solved}: percentage of instances for which the solver provides a feasible solution within the time limit, 
\textbf{Time Avg}: average total time in seconds (including instance construction and solution time), 
\textbf{Time Std}: standard deviation of the total time,
\textbf{Time Solve Avg}: average solution time in seconds, 
\textbf{Time Solve Std}: standard deviation of the solution time.
The last row (\textbf{AVG}) reports the average values for each column across all tested instances.
}
\label{tab:performance-comparison1}
\vspace{-0.3cm}
\end{table*}

The computational results highlight the significant impact of the CAF preprocessing on the performance of the proposed approach. The experiments consider instances up to $V=15$, as larger instances cannot be reliably solved within the imposed time limit. 
{For $V=15$, the solver successfully solves $40\%$ of the runs within the imposed time limit, while in the remaining $60\%$ of the runs no feasible solution is found within the time limit. This highlights the increasing difficulty of the problem and the role of the CAF preprocessing in improving model tractability.}
In particular, for smaller instances ($V \leq 11$), both formulations consistently reach the optimal solution in all runs. However, as the instance size increases, noticeable differences between the two approaches emerge. In particular, the CAF-based formulation generally produces solutions with a lower optimality gap compared to the original model solved with the classical solver. For example, when $V=12$, the gap decreases from $14\%$ to $8\%$, while for $V=13$ it is reduced from $43\%$ to $11\%$. Overall, when considering all tested instances, the use of CAF leads to an average reduction of approximately $31\%$ in the optimality gap. These results indicate that the preprocessing helps the solver focus on more promising arcs, thereby improving the quality of the obtained solutions.
Moreover, the CAF preprocessing also improves the computational efficiency of the proposed approach. For the largest instance ($V=15$), { it is important to note that only $40\%$ of the runs yield feasible solutions within the imposed time limit. Consequently, the reported optimality gaps for this instance size are based on a limited subset of runs and should be interpreted with caution. Nevertheless, even in this case,} the average solution time decreases from $25,28$ seconds to $20,97$ seconds, corresponding to a reduction of approximately $17\%$. Considering all tested instances, the average computational time decreases from $17,15$ seconds to $16,36$ seconds, yielding an overall reduction of about $5\%$. 
These results show that limiting the number of candidate arcs leads to a more compact optimization model, which can be solved more efficiently by the solver.
\vspace{-0.3cm}
\section{Conclusions}\label{sec:conclusions}
This paper addressed the { symmetric} TSP by proposing a preprocessing technique aimed at reducing the size of the optimization model while preserving feasibility. The proposed CAF procedure restricts the set of candidate arcs by retaining only the lowest-cost neighbors for each vertex. From a theoretical perspective, the preprocessing guarantees that the reduced graph maintains sufficient connectivity to admit at least one Hamiltonian cycle. Consequently, the feasibility of the optimization problem is preserved even after significantly reducing the number of candidate arcs.
Computational experiments show that CAF reduces the number of decision variables by approximately $30\%$ compared to the original formulation. When applied within a classical ILP framework, the preprocessing {preserves feasibility and yields high-quality solutions} while improving computational efficiency, demonstrating that the approach is beneficial also for traditional optimization methods.
The advantages of CAF become even more relevant in the context of quantum optimization, where the number of variables directly affects the tractability of the model. Experimental results obtained with the hybrid quantum solver show that CAF reduces the average optimality gap by approximately $29\%$ while also decreasing computational time.
Overall, the results demonstrate that the proposed preprocessing strategy effectively reduces the {size of the TSP formulation, leading to more tractable optimization models,}
 while maintaining high-quality solutions in both classical and quantum optimization settings.  This highlights the potential of CAF as a practical tool for improving the scalability of combinatorial optimization models, particularly in the presence of current hardware limitations in quantum computing.

Future research may further explore strategies to enhance the scalability of the proposed approach, particularly in the context of large-scale TSP instances and quantum optimization frameworks. Possible directions include the investigation of alternative model reduction techniques and iterative solution procedures that dynamically refine the optimization model during the solution process. Such developments could further improve computational efficiency and facilitate the integration of classical optimization methods with emerging hybrid quantum–classical solvers.
\vspace{-0.35cm}
\bibliographystyle{IEEEtran}
\bibliography{bibliography}

@article{cellini2024qal,
  title={Qal-bp: an augmented lagrangian quantum approach for bin packing},
  author={Cellini, Lorenzo and Macaluso, Antonio and Lombardi, Michele},
  journal={Scientific Reports},
  volume={14},
  number={1},
  pages={5142},
  year={2024},
  publisher={Nature Publishing Group UK London}
}

@article{ciacco2025steiner2,
  title={Steiner Traveling Salesman Problem with Time Windows and Pickup-Delivery: integrating classical and quantum optimization},
  author={Ciacco, Alessia and Guerriero, Francesca and Osaba, Eneko},
  journal={arXiv preprint arXiv:2508.17896},
  year={2025}
}

@inproceedings{li2023quantum,
  title={Quantum Computing Approaches to Optimize Employee Scheduling in Multi-task Call Centers},
  author={Li, Cheng and Liu, Zhaoyang and Song, Yu and Liu, Haojie and Liu, Hanlin and Liu, Xiaodong},
  booktitle={The International Conference on Smart Manufacturing, Industrial \& Logistics Engineering (SMILE)},
  pages={3--9},
  year={2023},
  organization={Springer}
}

@article{holliday2025advanced,
  title={Advanced Quantum Annealing Approach to Vehicle Routing Problems with Time Windows},
  author={Holliday, James B and Blount, Darren and Osaba, Eneko and Luu, Khoa},
  journal={arXiv preprint arXiv:2503.24285},
  year={2025}
}

@inproceedings{de2022hybrid,
  title={Hybrid quantum-classical heuristic for the bin packing problem},
  author={de Andoin, Mikel Garcia and Osaba, Eneko and Oregi, Izaskun and Villar-Rodriguez, Esther and Sanz, Mikel},
  booktitle={Proceedings of the Genetic and Evolutionary Computation Conference Companion},
  pages={2214--2222},
  year={2022}
}

@inproceedings{malviya2023logistics,
  title={Logistics network optimization using quantum annealing},
  author={Malviya, Gajendra and AkashNarayanan, B and Seshadri, Janani},
  booktitle={International Conference on Emerging Trends and Technologies on Intelligent Systems},
  pages={401--413},
  year={2023},
  organization={Springer}
}

@article{ciacco2026facility,
  title={Quantum annealing for the two-level facility location problem},
  author={Ciacco, Alessia and Guerriero, Francesca and Saccomanno, Francesco Paolo},
  journal={Future Generation Computer Systems},
  volume={174},
  pages={107961},
  year={2026},
  publisher={Elsevier}
}

@inproceedings{osaba2025quantum,
  title={Quantum-Assisted Automatic Path-Planning for Robotic Quality Inspection in Industry 4.0},
  author={Osaba, Eneko and Garrote, Estibaliz and Miranda-Rodriguez, Pablo and Ciacco, Alessia and Cabanes, Itziar and Mancisidor, Aitziber},
  booktitle={2025 IEEE International Conference on Quantum Computing and Engineering (QCE)},
  volume={2},
  pages={388--389},
  year={2025},
  organization={IEEE}
}

@inproceedings{ciacco2025steiner,
  title={Steiner Traveling Salesman Problem with Quantum Annealing},
  author={Ciacco, Alessia and Guerriero, Francesca and Osaba, Eneko},
  booktitle={Proceedings of the Genetic and Evolutionary Computation Conference Companion},
  pages={2412--2418},
  year={2025}
}

@article{jain2021tsp,
  author  = {Jain, Siddharth},
  title   = {Solving the Traveling Salesman Problem on the D-Wave Quantum Computer},
  journal = {Frontiers in Physics},
  volume  = {9},
  pages   = {760783},
  year    = {2021},
  doi     = {10.3389/fphy.2021.760783}
}

@article{warren2020tsp,
  author  = {Warren, Richard H.},
  title   = {Solving Combinatorial Problems by Two D-Wave Hybrid Solvers: A Case Study of Traveling Salesman Problems in the TSP Library},
  journal = {arXiv preprint arXiv:2106.05948},
  year    = {2021},
  url     = {https://arxiv.org/abs/2106.05948}
}

@article{bochkarev2026quantum,
  author  = {Bochkarev, Alexey and Heese, Raoul and J{\"a}ger, Sven and Schiewe, Philine and Sch{\"o}bel, Anita},
  title   = {Quantum Computing for Discrete Optimization: A Highlight of Three Technologies},
  journal = {European Journal of Operational Research},
  volume  = {329},
  pages   = {747--766},
  year    = {2026},
  doi     = {10.1016/j.ejor.2025.07.063}
}

@article{rajak2023quantum,
  title={Quantum annealing: An overview},
  author={Rajak, Atanu and Suzuki, Sei and Dutta, Amit and Chakrabarti, Bikas K},
  journal={Philosophical Transactions of the Royal Society A},
  volume={381},
  number={2241},
  pages={20210417},
  year={2023},
  publisher={The Royal Society}
}

@article{hauke2020perspectives,
  title={Perspectives of quantum annealing: Methods and implementations},
  author={Hauke, Philipp and Katzgraber, Helmut G and Lechner, Wolfgang and Nishimori, Hidetoshi and Oliver, William D},
  journal={Reports on Progress in Physics},
  volume={83},
  number={5},
  pages={054401},
  year={2020},
  publisher={IOP Publishing}
}

@article{benson2023cqm,
  title={A CQM-based approach to solving a combinatorial problem with applications in drug design},
  author={Benson, B Maurice and Ingman, Victoria M and Agarwal, Abhay and Keinan, Shahar},
  journal={arXiv preprint arXiv:2303.15419},
  year={2023}
}

@incollection{karp1972reducibility,
  title={Reducibility among combinatorial problems},
  author={Karp, Richard M},
  booktitle={50 Years of Integer Programming 1958-2008: from the Early Years to the State-of-the-Art},
  pages={219--241},
  year={2009},
  publisher={Springer}
}

@article{reinelt1991tsplib,
  title={TSPLIB—A traveling salesman problem library},
  author={Reinelt, Gerhard},
  journal={ORSA journal on computing},
  volume={3},
  number={4},
  pages={376--384},
  year={1991},
  publisher={Informs}
}

@incollection{applegate2011traveling,
  title={The traveling salesman problem: a computational study},
  author={Applegate, David L and Bixby, Robert E and Chv{\'a}tal, Va{\v{s}}ek and Cook, William J},
  booktitle={The traveling salesman problem},
  year={2011},
  publisher={Princeton university press}
}

@book{reinelt2001traveling,
  title={The traveling salesman: computational solutions for TSP applications},
  author={Reinelt, Gerhard},
  year={2001},
  publisher={Springer}
}

@article{dirac1952some,
  title={Some theorems on abstract graphs},
  author={Dirac, Gabriel Andrew},
  journal={Proceedings of the London Mathematical Society},
  volume={3},
  number={1},
  pages={69--81},
  year={1952},
  publisher={Oxford University Press}
}

@article{osaba2024solving,
  title={Solving a real-world package delivery routing problem using quantum annealers},
  author={Osaba, Eneko and Villar-Rodriguez, Esther and Asla, Ant{\'o}n},
  journal={Scientific Reports},
  volume={14},
  number={1},
  pages={24791},
  year={2024},
  publisher={Nature Publishing Group UK London}
}

@inproceedings{ciacco2025Educational,
  title={Quantum annealing for staff scheduling in educational environments},
  author={Ciacco, Alessia and Guerriero, Francesca and Osaba, Eneko},
  booktitle={2026 International Conference on Quantum Communications, Networking, and Computing (QCNC)},
  pages={630--637},
  year={2026},
  organization={IEEE}
}

@article{osaba2025d,
  title={D-wave’s nonlinear-program hybrid solver: Description and performance analysis},
  author={Osaba, Eneko and Miranda-Rodriguez, Pablo},
  journal={IEEE Access},
  year={2025},
  publisher={IEEE}
}

@article{ciacco2025review,
  title={Review of quantum algorithms for medicine, finance and logistics},
  author={Ciacco, Alessia and Guerriero, Francesca and Macrina, Giusy},
  journal={Soft Computing},
  volume={29},
  number={4},
  pages={2129--2170},
  year={2025},
  publisher={Springer}
}
\end{document}